\def\mathP{\mathrm{Pr}}
\def\mathR{\mathbb{R}}
\newtheorem{theorem}{Theorem}[section]
\newtheorem{lemma}[theorem]{Lemma}
\newtheorem{proposition}[theorem]{Proposition}
\newcommand{\blind}{1}
\begin{document}

\def\spacingset#1{\renewcommand{\baselinestretch}%
{#1}\small\normalsize} \spacingset{1}

\if1\blind
{
	\title{Modeling Spatial Extremes Using Normal Mean-Variance Mixtures}
	\author{Zhongwei Zhang\\
	CEMSE Division, King Abdullah University of Science and Technology\\
	and \\
	Rapha\"el Huser \\
	CEMSE Division, King Abdullah University of Science and Technology\\
	and \\
	Thomas Opitz\\
	Biostatistics and Spatial Processes, INRAE\\
	and\\
	Jennifer Wadsworth\\
	Department of Mathematics and Statistics, Lancaster University}
	\date{}
	\maketitle
} \fi

\if0\blind
{
	\bigskip
	\bigskip
	\bigskip
	\begin{center}
		{\LARGE\bf Title}
	\end{center}
	\medskip
} \fi

\medskip

\begin{abstract}
    Classical models for multivariate or spatial extremes are mainly based upon the asymptotically justified max-stable or generalized Pareto processes. These models are suitable when asymptotic dependence is present, i.e., the joint tail decays at the same rate as the marginal tail. However, recent environmental data applications suggest that asymptotic independence is equally important and, unfortunately, existing spatial models in this setting that are both flexible and can be fitted efficiently are scarce. Here, we propose a new spatial copula model based on the generalized hyperbolic distribution, which is a specific normal mean-variance mixture and is very popular in financial modeling. The tail properties of this distribution have been studied in the literature, but with contradictory results. It turns out that the proofs from the literature contain mistakes. We here give a corrected theoretical description of its tail dependence structure and then exploit the model to analyze a simulated dataset from the inverted Brown--Resnick process, hindcast significant wave height data in the North Sea, and wind gust data in the state of Oklahoma, USA. We demonstrate that our proposed model is flexible enough to capture the dependence structure not only in the tail but also in the bulk.
\end{abstract}
\noindent
{\it Keywords}: Asymptotic independence; Extremal dependence modeling; Generalized hyperbolic distribution; Normal mean-variance mixtures; Spatial extremes.

\spacingset{1.4} 

\section{Introduction}
The statistical modeling of spatial extremes has gained significant interest in recent decades due to the increasing occurrence and sizes of natural extreme events, such as heat waves, heavy rainfall, and wildfires.
When modeling spatial extremes, accurate inference for the marginal distribution at each site and a precise assessment of the dependence structure of extreme events among different sites are both needed; see \cite{Davison2012,Davison2015} and \cite{HuserW2020} for an overview.
In this article we focus on modeling the spatial dependence of extreme events.

Classical models for spatial extremes are mainly based upon the asymptotically justified max-stable processes \citep{deHaan1984,deHaan2006} or generalized Pareto processes \citep{Rootzen2006,Ferreira2014}. 
The limiting dependence structures that arise in these models must be either asymptotically dependent (defined in Section \ref{taildependence}), i.e., the joint tail decays at the same rate as the marginal tail, or for maxima, exactly independent.
However, asymptotic independence appears to be equally important, as suggested by recent environmental data applications \citep{Wadsworth2012,Le2018}.
Unfortunately, existing models in this setting that are both flexible and can be fitted efficiently are scarce.

Existing models from the literature that can capture asymptotic independence include the Gaussian copula model \citep{Bortot2000}, inverted max-stable models \citep{Wadsworth2012}, the Huser--Wadsworth model \citep{Huser2019}, and the conditional extremes model \citep{WadsworthT2019}.
The Gaussian copula is the simplest model and has a restrictive dependence structure; inverted max-stable models present the same difficulties for inference as max-stable models since often only likelihoods based on lower-dimensional densities are available \citep{Padoan2010,Castruccio2016}; the Huser--Wadsworth model can capture both asymptotic dependence and independence but its distribution and (potentially censored) density functions rely on unidimensional integrals which have to be computed numerically, and this is computationally prohibitive in high dimensions, especially when computation of the multivariate normal distribution function is required; the conditional extremes model allows the change of asymptotic dependence class with distance between sites but it lacks an unconditional interpretation; see \cite{HuserW2020} for a more detailed discussion.

Here we propose a flexible model based on an extension of scale mixtures.
A random vector $\bm{X}\in\mathR^d$ is called a scale mixture if it has the representation $\bm{X}=R \bm{W}$, where $R$ is a non-negative univariate random variable and $\bm{W}\in\mathR^d$ is a random vector. 
Scale mixtures provide a flexible family of distributions that can capture both asymptotic dependence and independence depending on the specification of the tail of $R$ and $\bm{W}$; see \cite{HOT}, \cite{Huser2019}, \cite{EOW2019}.
In practice $\bm{W}$ is often taken as a Gaussian random vector and in this case $\bm{X}$ is termed a Gaussian scale mixture.
Natural alternatives and extensions of Gaussian scale mixtures include the Gaussian location mixtures \citep{KHG}, and the skew-$t$ model \citep{Morris2017}, which is a specific Gaussian location-scale mixture.
Here we focus on a different form of Gaussian location-scale mixtures, for which the term \emph{normal mean-variance mixtures} has been coined in the literature. 
Specifically, $\bm{X}$ is called a normal mean-variance mixture if it can be represented as
\begin{equation}\label{NMVM_repre}
\bm{X} = \bm{\mu} + \bm{\gamma}R + \sqrt{R}\bm{W}, \quad R \perp \!\!\! \perp  \bm{W},
\end{equation}
where $\bm{\mu}\in\mathR^d$ is a location parameter vector, $\bm{\gamma}\in\mathR^d$ regulates the skewness, $R\sim F_R$ is a non-negative mixing random variable, and $\bm{W}\sim\mathcal{N}_d(\bm{0},\Sigma)$ is a Gaussian random vector with correlation matrix $\Sigma$.
This provides a very flexible family of distributions and by properly choosing the parameters and mixing distribution, many well-known multivariate distributions can be obtained.
Another benefit of normal mean-variance mixtures is the simple implementation of their conditional simulation, using their property of closedness under conditioning \citep{Jamalizadeh2019}, which is in clear contrast to max-stable or inverted max-stable models.

A prominent example of normal mean-variance mixtures is the generalized hyperbolic (GH) distribution, when the mixing distribution $F_R$ is taken as generalized inverse Gaussian.
The skew-$t$ model \citep{Morris2017} can be represented as $\bm{X} = \bm{\mu} + \gamma_c\bm{1}_d\sqrt{R}|Z| + \sqrt{R}\bm{W}, \gamma_c\in\mathR$, where $\bm{1}_d$ is a $d$-dimensional vector of $1$s, $R$ has an inverse Gamma distribution, which is a limiting case of the generalized inverse Gaussian distribution, $Z$ is a standard normal random variable, $\bm{W}$ remains a Gaussian random vector and $R, Z, \bm{W}$ are mutually independent.
Although the representations of this skew-$t$ model and the GH distribution look similar, a major difference between them is the extra random variable $Z$ in the representation of the skew-$t$ model, and their tail dependence structures are significantly different.
That is, the skew-$t$ model is asymptotically dependent, and the GH distribution has more free parameters and is asymptotically independent except in one limiting case; see Section \ref{section:GHtail} for more details.

The GH distribution is very popular in financial modeling \citep{BarndorffNielsen1997,Prause1999} thanks to its flexible univariate distribution and infinite divisibility; more details of this distribution are given in Section \ref{GH_intro}.
Due to the popularity of this distribution in finance, its tail properties have been studied in the literature \citep{Nolde14,Hammerstein16}, but surprisingly with contradictory results.
By examining their proofs in detail, we have found that both of them contain subtle mistakes.
Here, we point out the mistakes in their proofs, which lead to the contradiction, and give a corrected description of the tail dependence of the GH distribution.

Based on this result, we propose to use the GH copula for spatial extremes, which has been less investigated compared with its flexible univariate distribution, and to our knowledge not yet exploited in environmental applications.
This model provides flexible dependence structures with many subclasses and limiting models, such as the normal inverse Gaussian, hyperbolic,  Gaussian and student $t$ copulas. 
We apply the model using a full likelihood approach (thus, avoiding the computationally prohibitive censoring mechanism) to a simulated dataset from the inverted Brown--Resnick process, the hindcast significant wave height data considered in \cite{Wadsworth2012} and \cite{Huser2019}, and wind gust data in the state of Oklahoma, USA, and demonstrate that our proposed model is flexible enough to capture the dependence structure not only in the tail but also in the bulk.

This paper is structured as follows. In Section 2 we introduce the multivariate generalized hyperbolic distribution and measures of tail dependence, and we then review existing results on the tail dependence properties of the GH distribution and present a corrected description.
Section 3 introduces the copula-based likelihood inference, followed by two simulation studies, and Section 4 consists of two data applications. 
Section 5 concludes with a discussion.

\section{Modeling}
\subsection{The Generalized Hyperbolic (GH) Distribution}\label{GH_intro}
The GH distribution is a specific normal mean-variance mixture with representation (\ref{NMVM_repre}) and mixing distribution $F_R$ specified to be generalized inverse Gaussian, denoted as GIG$(\lambda,\kappa,\psi)$. 
The GIG$(\lambda,\kappa,\psi)$ probability density function is
\[
f_\text{GIG}(x) = \Big(\frac{\psi}{\kappa}\Big)^{\lambda/2}\frac{x^{\lambda-1}}{2K_\lambda(\sqrt{\kappa\psi})}\exp\Big\{-\frac{1}{2}\Big(\frac{\kappa}{x}+\psi x\Big) \Big\}, \quad x>0,
\]
where $K_\lambda$ is the modified Bessel function of the second kind with index $\lambda$, and, for $\bm{x}\in\mathR^d$, the probability density function of a $d$-dimensional GH distribution is
\[
    f_\text{GH}(\bm{x}) = a_d \cdot \frac{K_{\lambda-d/2}\Big[\sqrt{\{\kappa+(\bm{x}-\bm{\mu})^\top \Sigma^{-1}(\bm{x}-\bm{\mu})\}(\psi+\bm{\gamma}^\top \Sigma^{-1}\bm{\gamma})}\Big] e^{(\bm{x}-\bm{\mu})^\top\Sigma^{-1}\bm{\gamma}}} {\Big[\sqrt{\{\kappa+(\bm{x}-\bm{\mu})^\top \Sigma^{-1}(\bm{x}-\bm{\mu})\}(\psi+\bm{\gamma}^\top \Sigma^{-1}\bm{\gamma})} \Big]^{d/2 - \lambda}},
\]
where 
\[
a_d = \frac{\psi^{\lambda/2} (\psi+\bm{\gamma}^\top \Sigma^{-1}\bm{\gamma})^{d/2 - \lambda}}{(2\pi)^{d/2}|\Sigma|^{1/2}\kappa^{\lambda/2}K_\lambda(\sqrt{\kappa \psi})},
\]
$\bm{\mu}\in\mathR^d$ is a location parameter vector, $\bm{\gamma}\in\mathR^d$ regulates the skewness, $\Sigma\in\mathR^{d\times d}$ is a positive definite dispersion matrix, and $\lambda, \kappa, \psi \in\mathR$ control the shape of the mixing GIG distribution.
We denote this GH distribution by $\bm{X} \sim \text{GH}_d(\lambda,\kappa,\psi,\bm{\gamma},\bm{\mu},\Sigma)$.
Both in the GIG and GH distributions, the admissible parameter values are $\lambda<0, \kappa>0, \psi\geq 0$ or $\lambda=0, \kappa>0, \psi>0$ or $\lambda>0, \kappa\geq 0, \psi>0$.
Note that $\psi=0$ and $\kappa=0$ should be understood as two limiting cases.

One important property of the GH distribution is that it is closed under marginalization, conditioning and linear transformations. 
Specifically, if $\bm{X} \sim \text{GH}_d(\lambda,\kappa,\psi,\bm{\gamma},\bm{\mu},\Sigma)$ and $\bm{Y}=B\bm{X}+\bm{b}$, where $B\in\mathR^{k\times d}$ and $\bm{b}\in\mathR^k$, then $\bm{Y}\sim \text{GH}_k(\lambda,\kappa,\psi,B\bm{\gamma},B\bm{\mu}+\bm{b},B\Sigma B^\top)$.
This distribution is a rich family with many special subclasses and limiting cases; see Table \ref{tab:GHsubclass} for some examples and the corresponding admissible parameter domains.
Note that, for the GH, hyperbolic, and NIG distributions the dispersion matrix $\Sigma$ needs to be a correlation matrix for identifiability reasons.
This distribution was first introduced in \cite{Barndorff1977} to model sand sizes and has become very popular in modeling turbulence and returns of financial assets; see \cite{BarndorffNielsen1997}, \cite{Prause1999}, and \cite{Schimidt2003}.
It is also worthwhile to mention that the GH distribution is elliptical if and only if $\bm{\gamma}=\bm{0}$; see Corollary 3 in \cite{Hammerstein16}.
More details on its properties can be found in \cite{BJ81}, \cite{Prause1999} and \cite{McNeil2005}. 

\begin{table}
\centering
\caption{Some special subclasses and limiting cases of the GH distribution}
\begin{tabular}{lcccc}\toprule
{} &\multicolumn{4}{c}{Parameter domain}  \\
Distribution & $\lambda$ & $\kappa$ & $\psi$ & $\bm{\gamma}$ \\\midrule
Hyperbolic & $(d+1)/2$ & $>0$ & $>0$ & $\in\mathR^d$ \\
Normal inverse Gaussian (NIG) & $-1/2$ & $>0$ & $>0$ & $\in\mathR^d$  \\
Student-$t$ with {\rm df} degrees of freedom & $-{\rm df}/2<0$ & ${\rm df}>0$ & $0$ & $\bm{0}$ \\
Cauchy & $-1/2$ & $1$ & $0$ & $\bm{0}$ \\
\bottomrule
\end{tabular}
\label{tab:GHsubclass}
\end{table}

There are several equivalent parametrizations for the GH distribution.
The most common one is to parametrize it as GH$(\lambda,\alpha,\bm{\mu},\Delta,\delta,\bm{\beta})$,
where we let $\Delta=|\Sigma|^{-1/d}\Sigma$, $\bm{\beta}=\Sigma^{-1}\bm{\gamma}$, $\delta=\sqrt{|\Sigma|^{1/d}\kappa}$, and $\alpha=\sqrt{|\Sigma|^{-1/d}(\psi+\bm{\gamma}^\top\Sigma^{-1}\bm{\gamma})}$. 
This parametrization was used in \cite{Barndorff1977}, but it does not give the nice property that the important parameters $\alpha$ and $\delta$ are invariant under linear transformations.
Therefore, we here instead adopt the parametrization $\text{GH}_d(\lambda,\kappa,\psi,\bm{\gamma},\bm{\mu},\Sigma)$.

\subsection{Measures of Tail Dependence} \label{taildependence}
\cite{CHT1999} proposed the tail dependence coefficient $\chi$, providing it exists, to measure the extremal dependence between random variables $X_1$ and $X_2$.
It may be defined through the limit
\begin{equation}\label{chidefinition}
\chi_u := \frac{\mathP\{F_1 (X_1) >u, F_2 (X_2) >u\}}{\mathP\{F_1(X_1) > u \}} \rightarrow \chi \text{ as } u \rightarrow 1,
\end{equation}
where $F_1$ and $F_2$ are the marginal distribution functions of $X_1$ and $X_2$, respectively, and are assumed to be continuous without loss of generality.
The lower tail dependence coefficient can be obtained via reflection.
Throughout this work we focus on the upper tail dependence.
The vector $(X_1, X_2)^\top$ is termed asymptotically independent if $\chi = 0$ and asymptotically dependent if $\chi > 0$. 

When asymptotic independence is present, one useful quantity to measure the residual dependence at pre-asymptotic levels is the residual tail dependence coefficient $\eta$ \citep{Ledford1996}, which is defined by the following asymptotic expansion of the joint tail (providing it exists):
\begin{equation}\label{etadefinition}
\mathP\{F_1 (X_1) >u, F_2 (X_2) >u\} \sim \mathcal{L}\{(1-u)^{-1}\}(1-u)^{1/\eta} \text{ as } u \rightarrow 1,
\end{equation}
where $\mathcal{L}$ is a slowly varying function, i.e., $\mathcal{L}(tx)/\mathcal{L}(x)\rightarrow 1$ as $x\rightarrow\infty$ for all $t>0$, and $f(x)\sim g(x)$ as $x\rightarrow x_0$ means that, for $g$ non-zero in a neighbourhood of $x_0$, $f(x)/g(x) \rightarrow 1$ as $x\rightarrow x_0$.
The coefficient $\eta\in(0,1]$ determines the joint tail decay rate.
A reformulation of (\ref{etadefinition}) implies that $\eta$ can be alternatively defined through the limit
\begin{equation}\label{etadefinition2}
\eta_u := \frac{\log \mathP\{F_1(X_1) > u \} }{\log \mathP\{F_1(X_1) > u, F_2(X_2) > u\}} \rightarrow \eta \text{ as } u \rightarrow 1,
\end{equation}
and this definition allows one to estimate $\eta$ with $\eta_u$ for $u$ close to $1$.

The relationship between these two above coefficients is $\eta<1 \Longrightarrow \chi = 0$ and $\chi>0 \Longrightarrow \eta = 1$.
The converse of this relationship does not necessarily hold; see Proposition 4 in \cite{Manner2011} and an example with $\eta=1, \chi=0$ in \cite{Huser2019}.
We say that $(X_1, X_2)^\top$ is (a) positively associated if $1/2<\eta\leq 1$; (b) near-independent if $\eta=1/2$; (c) negatively associated if $0<\eta<1/2$.
The bivariate measures $\chi$ and $\eta$ can be easily extended to dimension $d>2$ by replacing $\mathP\{F_1 (X_1) >u, F_2 (X_2) >u\}$ in (\ref{chidefinition}) and (\ref{etadefinition2}) with $\mathP\{F_1 (X_1) >u,\dots, F_d (X_d) >u\}$.
As a result, the pair of coefficients $(\chi, \eta)$ is able to measure the tail dependence strength across asymptotic dependence and independence classes, and it is often used for assessing model fit.

\subsection{Tail Dependence of the GH Distribution}\label{section:GHtail}
Recall that the GH distribution is elliptical if and only if $\bm{\gamma} = \bm{0}$. 
\cite{SF12} have proved that the elliptical GH distribution with dispersion matrix $\Sigma = \bigl(\begin{smallmatrix} 1 & \rho \\ \rho & 1 \end{smallmatrix}\bigr), 0\leq \rho < 1$, excluding the limiting case $\psi=0$, is asymptotically independent and has residual tail dependence coefficient $\eta=\sqrt{(1+\rho)/2}$; see also \cite{HOT}.
\cite{Nolde14} considered a more general class and claimed that the whole GH distribution family with dispersion matrix $\Sigma = \bigl(\begin{smallmatrix} 1 & \rho \\ \rho & 1 \end{smallmatrix}\bigr), |\rho| \neq 1$, has residual tail dependence coefficient $\sqrt{(1+\rho)/2}$, which implies that the skewness parameter $\bm{\gamma}$ has no influence on the residual tail dependence coefficient.
However, \cite{Hammerstein16} used a different approach and claimed that the GH distribution can be asymptotically dependent or independent depending on the choice of the parameters, although without giving the residual tail dependence coefficient when asymptotic independence is present.

After carefully investigating where this contradiction comes from, we found that both of their proofs have mistakes which lead to incorrect conclusions.
We detail their mistakes and a corrected version of von Hammerstein's proof in the Supplementary Material.
We now give a corrected description of the tail dependence property of the GH distribution, using a similar approach to \cite{Nolde14}.
In contrast with the often complicated calculation of the residual tail dependence coefficient by its definition, \cite{Nolde14} provided a geometric interpretation of this coefficient, which leads to simple and intuitive computations of this coefficient for a variety of distributions.
This is particularly the case when joint densities are easier to compute than joint distribution or survival functions.
We first recall some definitions and theorems from \cite{Nolde14} which will be used in our proof.

Consider a sequence of independent and identically distributed random vectors $\bm{Z}_1,\bm{Z}_2,\dots$ on $\mathR^2$. 
Let $N_n:=\{\bm{Z}_1/r_n,\dots,\bm{Z}_n/r_n\}$ denote an $n$-point sample cloud with scaling constants $r_n>0$, $r_n\rightarrow\infty$ as $n\rightarrow\infty$ and let $N_n(A)$ be the number of points of $N_n$ contained in the Borel set $A\in \mathR^d$, i.e. $N_n(A)=\sum_{i=1}^n \mathrm{1}_A(\bm{Z}_i/r_n)$.
Let $D$ be a compact set in $\mathR^2$. 
Then $D$ is called a \textit{limit set} of the sample cloud $N_n$ as $n\rightarrow\infty$ if (i) $\mathP\{N_n(U^c)>0\}\rightarrow 0$ for open sets $U$ containing $D$, and $U^c$ denotes the complement of $U$, and (ii) $\mathP\{N_n(\bm{p}+\varepsilon B)>m\} \rightarrow 1$ for all $m\geq 1, \varepsilon >0, \bm{p}\in D$, where $B$ denotes the Euclidean unit ball. 
If every ray from the origin intersects the boundary of a given set $D$ in a single point, then $D$ can be characterized by a continuous \textit{gauge function} $n_D$.
That is, $n_D: \mathR^2 \rightarrow [0,\infty)$ is a homogeneous function of degree one, i.e. $n_D(t\bm{x})=tn_D(\bm{x})$ for all $t>0$, and $D=\{\bm{x}\in\mathR^2: n_D(\bm{x})<1 \}$.
A set $D$ is called \textit{star shaped} if $\bm{x}\in D$ implies $t \bm{x}\in D$ for all $t\in (0,1)$.

A measurable function $\Psi: \mathR_+ \rightarrow \mathR_+$ is \textit{regularly varying} at infinity with exponent $a\in \mathR$ if for $x>0$,
$
\lim_{t\rightarrow\infty} \Psi(tx)/ \Psi(x) = x^a.
$
Roughly speaking, regularly varying functions behave asymptotically like power functions.
When $a=0$, the function $\Psi$ is slowly varying.

\begin{lemma}[Theorem 2.1 in \cite{Nolde14}]\label{Nolde1}
Let $\{(X_i,Y_i)^\top,i\geq 1 \}$ be independent random vectors in $\mathR^2$ from a distribution $G$ with marginal survival functions $1-G_i(s)=e^{-\Psi_i(s)},s>0$, where $\Psi_1$ and $\Psi_2$ are regularly varying at infinity with the same exponent $a>0$. Suppose there exists a scaling sequence $r_n >0$ such that the $n$-point sample cloud 
\[
N_n = \{(X_1/r_n,Y_1/r_n)^\top,\dots,(X_n/r_n,Y_n/r_n)^\top \}
\]
converges onto set $D$ whose interior is bounded, open, and star-shaped as $n\rightarrow \infty$. 
Let $\bm{q}=(q_1,q_2)^\top=\sup D$ denote the coordinate-wise supremum of $D$. Then $D$ is a star-shaped subset of $(-\infty,\bm{q}]$. Define 
\[
r_D = \min\{r\geq0: D\cap((rq_1,\infty)\times(rq_2,\infty))=\emptyset \}.
\]
If $\bm{q}\notin D$, then $r_D <1$, $X_1$ and $Y_1$ are asymptotically independent and the residual dependence coefficient is $\eta = r_D^a$.
\end{lemma}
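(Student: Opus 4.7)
The plan is to convert the topological convergence of the sample clouds $N_n$ onto $D$ into quantitative logarithmic rates for tail probabilities, and then extract $\eta$ directly from definition~(\ref{etadefinition2}) using the geometry of $D$. The intuition is that, once the scaled points of $N_n$ concentrate on a neighbourhood of $D$, the log-decay rate of $\mathP\{(X_1,Y_1)^\top/r_n\in K\}$ for sets $K$ bounded away from $D$ is controlled by the gauge function $n_D$, and the extremal corner $\bm{q}=\sup D$ enters the calculation through $r_D$.

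\emph{Step 1: quantitative rates.} First I would normalize by choosing $r_n$ so that the marginal scale matches the sample cloud scale: since $q_i$ is the coordinate-wise supremum of $D$, pick $r_n$ with $n\,\mathP\{X_1>q_1 r_n\}$ bounded, equivalently $\Psi_1(q_1 r_n)\sim\log n$, which by regular variation of $\Psi_1$ with exponent $a$ gives $\Psi_1(r_n)\sim q_1^{-a}\log n$, with an analogous relation for $\Psi_2$. Combining this with conditions~(i)--(ii) of sample cloud convergence and Potter-type bounds from regular variation, one obtains, for any compact $K$ bounded away from $D$,
\[
-\log\mathP\{(X_1,Y_1)^\top/r_n\in K\}/\log n \;\longrightarrow\; \inf_{\bm{x}\in K}n_D(\bm{x})^a,
\]
while marginally $-\log\mathP\{X_1>q_1 r_n\}/\log n\to 1$, and similarly for $Y_1$.

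\emph{Step 2: geometry and extraction of $\eta$.} Take $s_1=q_1 r_n$ and $s_2=q_2 r_n$, so that $\mathP\{X_1>s_1\}\sim\mathP\{Y_1>s_2\}$ and both marginals sit at the same uniform level $u=1-\mathP\{X_1>s_1\}$; the joint event then corresponds to $\{(X_1,Y_1)^\top/r_n\in K\}$ with $K=(q_1,\infty)\times(q_2,\infty)$. The key geometric identity is $\inf_{\bm{x}\in K}n_D(\bm{x})=1/r_D$: if $\bm{x}\in K$ and $s^\ast(\bm{x}):=\sup\{s>0:s\bm{x}\in D\}$ exceeded $r_D$, then $s^\ast(\bm{x})\bm{x}$ would lie both in $D$ (by compactness) and in $(r_D q_1,\infty)\times(r_D q_2,\infty)$, contradicting the definition of $r_D$; hence $n_D(\bm{x})=1/s^\ast(\bm{x})\geq 1/r_D$, with equality attained in the limit $\bm{x}\to\bm{q}$. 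Plugging into Step~1 gives $-\log\mathP\{X_1>s_1,Y_1>s_2\}\sim r_D^{-a}\log n$, and substituting into~(\ref{etadefinition2}) yields $\eta=r_D^a$. A short compactness argument using $\bm{q}\notin D$ and the continuity of $\min(x/q_1,y/q_2)$ on $D$ shows that the supremum $r_D=\sup_{\bm{x}\in D}\min(x/q_1,y/q_2)$ is attained strictly below $1$, so $\eta<1$, and the implication $\eta<1\Longrightarrow\chi=0$ recorded earlier forces asymptotic independence.

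\emph{Main obstacle.} The delicate step is the first one: promoting the qualitative sample cloud convergence in~(i)--(ii) (which only asserts that all points eventually lie in any open neighbourhood of $D$ and that every ball around a point of $D$ accumulates many points) into the sharp logarithmic asymptotic for $-\log\mathP\{(X_1,Y_1)^\top/r_n\in K\}$ with the correct limit $\inf_K n_D^a$. This bridge requires combining the compactness and star-shapedness of $D$ with Potter-type bounds to transport regular variation across level sets of $n_D$, together with a covering argument to lift pointwise comparisons to a uniform rate on the quadrant. Once this quantitative link is in place, the rest of the argument is the geometric identity above and the homogeneity of $n_D$.
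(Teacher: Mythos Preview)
The paper does not contain its own proof of this lemma: it is quoted as Theorem~2.1 of \cite{Nolde14}, and immediately after stating Lemmas~\ref{Nolde1} and~\ref{Nodel2} the authors write ``For details about the proof of Lemma~\ref{Nolde1} and Lemma~\ref{Nodel2} and further background knowledge, we refer to \cite{Nolde14} and references therein.'' There is therefore no in-paper argument to compare your proposal against.

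As a standalone sketch, your outline has the right architecture. The geometric identification in Step~2, namely $\inf_{\bm{x}\in K}n_D(\bm{x})=1/r_D$ for the corner quadrant, is essentially correct (with the minor caveat that $K=(q_1,\infty)\times(q_2,\infty)$ is unbounded, so the Step~1 statement as you wrote it for \emph{compact} $K$ needs either a truncation or a separate tail estimate before it can be applied). The compactness argument for $r_D<1$ is also sound. However, you correctly flag Step~1 as the main obstacle and then do not resolve it: the displayed limit $-\log\mathP\{(X_1,Y_1)^\top/r_n\in K\}/\log n\to\inf_{\bm{x}\in K}n_D(\bm{x})^a$ is a large-deviation statement that does not follow from conditions~(i)--(ii) of sample-cloud convergence alone, since those conditions are purely topological and carry no quantitative rate. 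Nolde's actual proof supplies this rate through additional structure (density-level asymptotics of the type in Lemma~\ref{Nodel2} and a careful link between $r_n$, $\Psi_i$, and the gauge function); your appeal to ``Potter-type bounds'' and a ``covering argument'' names plausible tools but leaves the central estimate asserted rather than established. So the proposal identifies the correct skeleton and the correct difficulty, but stops short of the substantive analytic step.
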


The definition of $r_D$ implies that, if $D$ is convex, $r_D \bm{q}$ lies on the boundary of $D$ and thus $n_D(r_D \bm{q})=1$.
Using the homogeneity of $n_D$ and Lemma \ref{Nolde1} one can get $\eta=n_D(\bm{q})^{-a}$; see also \cite{Nolde2020} for further details.
In practice, a random vector is often described by its multivariate probability density function.
The relation between the density function and its limit set is given in the following lemma and thereby a convenient way to calculate the residual tail dependence coefficient $\eta$ is given.

\begin{lemma}[Proposition 3.1 in \cite{Nolde14}]\label{Nodel2}
Let $\bm{Z}_1,\bm{Z}_2,\dots$ be independent and identically distributed random vectors with a continuous positive density $g$ on $\mathR^d$. Let $\gamma:= -\log g$. Suppose there exists a function $\ell$ on $\mathR^d$ which is positive outside a bounded set, and a nonzero vector $\bm{v}$ such that
\[
\frac{\gamma(r_n\bm{u}_n)}{\gamma(r_n\bm{v})} \rightarrow \ell(\bm{u}), \text{ for some } r_n \rightarrow\infty, \text{ and any } \bm{u}_n \rightarrow \bm{u}\in \mathR^d.
\]
Then the sequence of sample clouds $N_n=\{\bm{Z}_1/r_n,\dots,\bm{Z}_n/r_n\}$ converges as $n\rightarrow\infty$ onto a limit set $D$ whose gauge function is given by $n_D=\ell^\theta$ for some positive constant $\theta$.
\end{lemma}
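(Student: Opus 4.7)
The plan is to establish the lemma in three main steps: (a) extract a degree of homogeneity for $\ell$ from the ratio hypothesis, (b) identify the candidate limit set $D$ together with its gauge function, and (c) verify the two topological requirements in the definition of convergence onto a limit set.

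For (a), I would first substitute $\bm{u}_n = t\bm{v}$ for fixed $t>0$ into the hypothesis, obtaining $\gamma(tr_n\bm{v})/\gamma(r_n\bm{v})\to \ell(t\bm{v})$. Applying the ratio condition also along the subsequence $s_n := tr_n$ and combining with standard regular variation arguments (Karamata-type uniform convergence on compacts), one extracts the multiplicative functional equation $\ell(t\bm{u}) = t^{\alpha}\ell(\bm{u})$ for all $t>0$ and some $\alpha>0$; positivity of $\alpha$ follows because $g$ is an integrable density, which forces $\gamma(r\bm{v})\to\infty$ at a non-degenerate rate. Setting $\theta := 1/\alpha$ then makes $\ell^{\theta}$ homogeneous of degree one. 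For (b), I would define $D := \{\bm{x}\in\mathR^2 : \ell(\bm{x})<1\}$. Homogeneity together with the hypothesis that $\ell$ is positive outside a bounded set ensures that $D$ is bounded, open and star-shaped, with continuous gauge function $n_D(\bm{x}) = \inf\{t>0 : \bm{x}/t\in D\} = \ell(\bm{x})^{1/\alpha} = \ell(\bm{x})^{\theta}$.

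For (c), I would verify the two items of the limit-set definition. For an open $U\supset D$, Boole's inequality gives $\mathP\{N_n(U^c)>0\} \leq n\,\mathP\{\bm{Z}_1/r_n\in U^c\} = n\int_{U^c} r_n^d\, g(r_n\bm{x})\,\mathrm{d}\bm{x}$. On $U^c$ one has $\ell\geq 1+\delta$ for some $\delta>0$ (using that $\bar{D}\subset U$ and the continuity/homogeneity of $\ell$), so for $n$ large $g(r_n\bm{x}) \leq \exp\{-(1+\delta/2)\gamma(r_n\bm{v})\}$, and the integral vanishes provided the scaling $r_n$ is balanced with $n$ through $n r_n^d \exp\{-(1+\delta/2)\gamma(r_n\bm{v})\}\to 0$. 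Conversely, for $\bm{p}\in D$ and $\varepsilon>0$, $N_n(\bm{p}+\varepsilon B)$ is Binomial with mean bounded below by $n r_n^d \exp\{-(1-\delta')\gamma(r_n\bm{v})\}$ for small $\delta'>0$ (using $\ell(\bm{p})<1$ and continuity on the ball of radius $\varepsilon$); under the same scaling balance this diverges, yielding $\mathP\{N_n(\bm{p}+\varepsilon B)>m\}\to 1$ for every fixed $m$ by an elementary Binomial tail bound.

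The main obstacle will be step (a): upgrading the ratio condition---stated only along the specific sequence $r_n$---to a genuine functional equation from which homogeneity of $\ell$ can be deduced. This requires invoking uniform convergence theorems from regular variation and carefully exploiting the flexibility in the choice of $\bm{u}_n\to\bm{u}$. An auxiliary subtlety is verifying the implicit Laplace-scale matching between $n$, $r_n$ and $\gamma(r_n\bm{v})$ used in (c), which is not an ingredient of the hypothesis but must be read off from integrability of $g$ and the existence of the limit $\ell$; this is what pins down $\theta$ uniquely and guarantees that $D$, rather than a dilate of it, is the true limit set.
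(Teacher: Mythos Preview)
The paper does not give its own proof of this lemma. Immediately after stating Lemmas~\ref{Nolde1} and~\ref{Nodel2} it writes: ``For details about the proof of Lemma~\ref{Nolde1} and Lemma~\ref{Nodel2} and further background knowledge, we refer to \cite{Nolde14} and references therein.'' So there is no in-paper argument to compare your proposal against; the result is imported wholesale from \cite{Nolde14}.

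That said, your outline follows the natural route of the original source: extract a homogeneity exponent for $\ell$ from the ratio hypothesis (regular variation of $\gamma$ along rays), identify $D$ as the unit sublevel set of $\ell$, and verify the two defining conditions of a limit set via Laplace-type density bounds. The obstacle you single out in the final paragraph is genuine and is exactly the crux in \cite{Nolde14}: the hypothesis as paraphrased here only says ``for some $r_n\to\infty$'', which by itself does not tie $r_n$ to $n$ and hence does not distinguish $D$ from an arbitrary dilate. In the original, this is resolved by first establishing regular variation of $s\mapsto\gamma(s\bm{v})$ and then normalising so that $\gamma(r_n\bm{v})\sim\log n$; under that balance your Binomial/Boole bounds in step~(c) go through. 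Two smaller points: your set $D$ should live in $\mathR^d$, not $\mathR^2$; and in the upper bound on $\mathP\{N_n(U^c)>0\}$ you need the estimate $\ell\geq 1+\delta$ to hold uniformly on the unbounded set $U^c$, which follows only after combining positivity of $\ell$ outside a bounded set with the homogeneity from step~(a).
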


For details about the proof of Lemma \ref{Nolde1} and Lemma \ref{Nodel2} and further background knowledge, we refer to \cite{Nolde14} and references therein.
We are now ready to use these two results to show that the GH distribution, except in the limiting case $\psi=0$, is asymptotically independent and we give its residual dependence coefficient.

\begin{proposition}\label{mainresult}
Let $(X_1, X_2)^\top \sim \text{GH}_2(\lambda,\kappa,\psi,\bm{\gamma},\bm{\mu},\Sigma)$ with $\bm{\gamma}=(\gamma_1,\gamma_2)^\top$, $\bm{\mu}=(\mu_1,\mu_2)^\top$, and $\Sigma = \bigl(\begin{smallmatrix} 1 & \rho \\ \rho & 1 \end{smallmatrix}\bigr), |\rho| \neq 1$. If $\psi > 0$, then $X_1, X_2$ are asymptotically independent and the residual dependence coefficient is
     \[
     \eta = \frac{1-\rho^2}{\sqrt{\{\psi(1-\rho^2)+\gamma_1^2-2\rho\gamma_1\gamma_2+ \gamma_2^2\}(m_1^2-2\rho m_1 m_2+ m_2^2)}-m_1(\gamma_1-\rho\gamma_2)-m_2(\gamma_2-\rho\gamma_1)},
     \]
     where
     \begin{equation*}
         m_1 = \frac{\gamma_1+\sqrt{\psi+\gamma_1^2}}{\psi}, \quad
         m_2 = \frac{\gamma_2+\sqrt{\psi+\gamma_2^2}}{\psi}.
     \end{equation*} 
\end{proposition}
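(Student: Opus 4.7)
The plan is to apply \cite{Nolde14}'s geometric framework (Lemmas \ref{Nolde1} and \ref{Nodel2}): determine the limit set $D$ of rescaled sample clouds from the log-density asymptotics, then read off $\eta$ as $r_D^a$. First I would expand $\gamma(\bm{x}) := -\log f_{\text{GH}}(\bm{x})$ using the standard Bessel asymptotic $K_\nu(y)\sim\sqrt{\pi/(2y)}\,e^{-y}$ as $y\to\infty$. Discarding the $O(\log\|\bm{x}\|)$ pre-factors and the shift by $\bm{\mu}$, the dominant term is
\[
g(\bm{x}) := \sqrt{(\psi+\bm{\gamma}^\top\Sigma^{-1}\bm{\gamma})\,\bm{x}^\top\Sigma^{-1}\bm{x}} - \bm{x}^\top\Sigma^{-1}\bm{\gamma},
\]
with remainder $o(\|\bm{x}\|)$. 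This $g$ is positively $1$-homogeneous, convex (sum of a scaled norm and a linear term), and strictly positive on $\mathR^2\setminus\{\bm{0}\}$ whenever $\psi>0$: the strict Cauchy--Schwarz inequality gives $(\bm{x}^\top\Sigma^{-1}\bm{\gamma})^2<(\psi+\bm{\gamma}^\top\Sigma^{-1}\bm{\gamma})\,\bm{x}^\top\Sigma^{-1}\bm{x}$. Consequently $\gamma(r_n\bm{u}_n)/\gamma(r_n\bm{v})\to g(\bm{u})/g(\bm{v})$ for any linearly growing $r_n$ and any fixed reference $\bm{v}$, and Lemma \ref{Nodel2} identifies the limit set as $D=\{\bm{x}:g(\bm{x})<K\}$, for a constant $K>0$ to be determined, with gauge $n_D=g/K$ (the exponent $\theta$ in Lemma \ref{Nodel2} must equal $1$ by $1$-homogeneity of $g$).

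Next I would fix $K$ and verify the hypothesis of Lemma \ref{Nolde1} via the margins. GH closure under linear transformations gives $X_i\sim\text{GH}_1(\lambda,\kappa,\psi,\gamma_i,\mu_i,1)$, and the univariate Bessel expansion followed by an Abel--Tauber-type integration yields $-\log\mathP(X_i>s)\sim s(\sqrt{\psi+\gamma_i^2}-\gamma_i)=s/m_i$. Thus $\Psi_i$ is regularly varying of index $a=1$, as required by Lemma \ref{Nolde1}, and the marginal extent of $D$ along the $i$th axis equals $m_i$, so $\bm{q}=(m_1,m_2)^\top$. To pin down $K$ I would solve the stationarity condition $\partial g/\partial x_{3-i}=0$ at the boundary point realizing $\max x_i$; the cleanest route uses the change of variables $\bm{y}=\Sigma^{-1/2}\bm{x}$, under which $g$ becomes the isotropic expression $\sqrt{\psi+\|\bm{\delta}\|^2}\,\|\bm{y}\|-\bm{y}^\top\bm{\delta}$ with $\bm{\delta}=\Sigma^{-1/2}\bm{\gamma}$, so the sublevel sets are translated discs. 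A short Lagrange calculation using $\|\Sigma^{1/2}\bm{e}_i\|^2=\Sigma_{ii}=1$ and $(\Sigma^{1/2}\bm{e}_i)^\top\bm{\delta}=\gamma_i$ yields $\max\{x_i:g(\bm{x})\le1\}=1/(\sqrt{\psi+\gamma_i^2}-\gamma_i)=m_i$, and homogeneity then forces $K=1$ (and the $i=1$, $i=2$ determinations agree automatically, a useful built-in consistency check). The same strict Cauchy--Schwarz inequality shows that the two coordinate maxima are realized at different boundary points, so $\bm{q}\notin D$.

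Finally, convexity of $D$ implies that the ray $\{r\bm{q}:r\ge0\}$ exits $D$ at a single point $r_D\bm{q}\in\partial D$, so $n_D(r_D\bm{q})=g(r_D\bm{q})=1$ and $r_D=1/g(\bm{q})$. By Lemma \ref{Nolde1} with $a=1$, $\eta=r_D^a=1/g(\bm{q})$. Substituting $\bm{q}=(m_1,m_2)^\top$ into $g$ using
\[
\bm{q}^\top\Sigma^{-1}\bm{q}=\tfrac{m_1^2-2\rho m_1m_2+m_2^2}{1-\rho^2},\ \ \bm{q}^\top\Sigma^{-1}\bm{\gamma}=\tfrac{m_1(\gamma_1-\rho\gamma_2)+m_2(\gamma_2-\rho\gamma_1)}{1-\rho^2},
\]
together with $\psi+\bm{\gamma}^\top\Sigma^{-1}\bm{\gamma}=\{\psi(1-\rho^2)+\gamma_1^2-2\rho\gamma_1\gamma_2+\gamma_2^2\}/(1-\rho^2)$, and then multiplying numerator and denominator by $1-\rho^2$, delivers exactly the claimed formula. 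The main obstacle is the Lagrange calculation that pins down $K=1$: it is not conceptually hard but requires careful bookkeeping between the two coordinate systems and the use of the identity $m_i\psi=\gamma_i+\sqrt{\psi+\gamma_i^2}$ to recognize the resulting expression. Everything else, including the final substitution for $g(\bm{q})$, is mechanical.
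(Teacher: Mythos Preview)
Your proposal is correct and follows essentially the same route as the paper: both apply Nolde's Lemmas~\ref{Nolde1}--\ref{Nodel2}, extract the same $1$-homogeneous gauge $g(\bm{u})=\sqrt{(\psi+\bm{\gamma}^\top\Sigma^{-1}\bm{\gamma})\,\bm{u}^\top\Sigma^{-1}\bm{u}}-\bm{u}^\top\Sigma^{-1}\bm{\gamma}$ from the Bessel asymptotics, identify the coordinatewise supremum $\bm{q}$ of the limit set, and conclude $\eta=1/g(\bm{q})$. The only differences are cosmetic---the paper shows $D$ is an ellipse via a discriminant calculation and finds $\bm{q}$ by direct quadratic algebra (reducing to $\psi u_i^{*2}-2h\gamma_i u_i^{*}-h^2=0$), whereas you argue convexity as ``norm plus linear'' and compute $\bm{q}$ by Lagrange multipliers in the $\bm{y}=\Sigma^{-1/2}\bm{x}$ coordinates; two small caveats are that your sublevel sets in the $\bm{y}$-coordinates are ellipses (with a focus at the origin) rather than translated discs, and that the normalization $K$ is immaterial since $r_D$ is scale-invariant by homogeneity (the paper's arbitrary $h$ cancels in the same way), so the detour through the marginal tails to ``pin down $K=1$'' is unnecessary though harmless.
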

\begin{proof}
Note that both the tail dependence coefficient and the residual dependence coefficient are copula properties, i.e., they are invariant under strictly increasing marginal transformations. 
Hence, for simplicity we assume $\bm{\mu}=\bm{0}$ and study the tail dependence of $(X_1,X_2)^\top \sim \text{GH}_2(\lambda,\kappa,\psi,\bm{\gamma},\bm{0},\Sigma)$.
Let $f$ be the probability density function of $(X_1,X_2)^\top$.
Using the asymptotic property of the Bessel function (see formula (9.7.2) in \cite{Abramowitz1972}), we know that $K_{\lambda}(x) = \sqrt{\frac{\pi}{2x}}\exp(-x) + o(1/\sqrt{x})$ as $x\rightarrow\infty$.
Then, for $\bm{u}, \bm{v} \in \mathR^2$ and $s\rightarrow\infty$, we have 
\begin{align*}
    \frac{-\log\{f(s\bm{u})\}}{-\log\{f(s\bm{v})\}} &\sim \frac{\sqrt{(\psi+\bm{\gamma}^\top\Sigma^{-1}\bm{\gamma})(\kappa+s^2\bm{u}^\top\Sigma^{-1}\bm{u})}-s\bm{u}^\top\Sigma^{-1}\bm{\gamma}} {\sqrt{(\psi+\bm{\gamma}^\top\Sigma^{-1}\bm{\gamma})(\kappa+s^2\bm{v}^\top\Sigma^{-1}\bm{v})}-s\bm{v}^\top\Sigma^{-1}\bm{\gamma}} \\
    &\sim \frac{\sqrt{(\psi+\bm{\gamma}^\top\Sigma^{-1}\bm{\gamma})\bm{u}^\top\Sigma^{-1}\bm{u}}-\bm{u}^\top\Sigma^{-1}\bm{\gamma}} {\sqrt{(\psi+\bm{\gamma}^\top\Sigma^{-1}\bm{\gamma})\bm{v}^\top\Sigma^{-1}\bm{v}}-\bm{v}^\top\Sigma^{-1}\bm{\gamma}}.
\end{align*}
For fixed $\bm{v}\neq \bm{0}$, let $h:=\sqrt{(\psi+\bm{\gamma}^\top\Sigma^{-1}\bm{\gamma})\bm{v}^\top\Sigma^{-1}\bm{v}}-\bm{v}^\top\Sigma^{-1}\bm{\gamma}$.
Then, consider the set $D$:
\[
D = \{\bm{u}\in\mathR^2: \big(\sqrt{(\psi+\bm{\gamma}^\top\Sigma^{-1}\bm{\gamma})\bm{u}^\top\Sigma^{-1}\bm{u}}-\bm{u}^\top\Sigma^{-1}\bm{\gamma}\big)/h \leq 1\}.
\]
Note that $\big(\sqrt{(\psi+\bm{\gamma}^\top\Sigma^{-1}\bm{\gamma})\bm{u}^\top\Sigma^{-1}\bm{u}}-\bm{u}^\top\Sigma^{-1}\bm{\gamma}\big)/h \leq 1$ is equivalent to
\[
(\psi+\bm{\gamma}^\top\Sigma^{-1}\bm{\gamma})\bm{u}^\top\Sigma^{-1}\bm{u} \leq (h+\bm{u}^\top\Sigma^{-1}\bm{\gamma})^2,
\]
which can be simplified to
\begin{equation}\label{equa}
(\psi+\gamma_2^2)u_1^2 + (\psi+\gamma_1^2)u_2^2 - 2(\rho\psi+\gamma_1\gamma_2)u_1u_2 - 2h\{(\gamma_1-\rho\gamma_2)u_1+(\gamma_2-\rho\gamma_1)u_2\} \leq h^2(1-\rho^2).
\end{equation}
Since $\psi > 0, \rho\neq 1$, we have 
\begin{align*}
    \Delta &= 4(\rho\psi+\gamma_1\gamma_2)^2 - 4(\psi+\gamma_2^2)(\psi+\gamma_1^2) \\
    &= -4(1-\rho^2)\psi^2-4\psi(\gamma_1^2+\gamma_2^2-2\rho\gamma_1\gamma_2) \\
    &<0.
\end{align*}
This implies that $D$ is an ellipse, and thus convex.
By Lemma \ref{Nodel2}, $D$ is the limit set and its associated gauge function is $n_D(\bm{u})=\{\sqrt{(\psi+\bm{\gamma}^\top\Sigma^{-1}\bm{\gamma})\bm{u}^\top\Sigma^{-1}\bm{u}}-\bm{u}^\top\Sigma^{-1}\bm{\gamma}\}/h$.
Furthermore, note that $n_D(\bm{0}) \leq 1$, which implies that $\bm{0}$ is contained in $D$.
Hence, $D$ is bounded and star shaped.
Now it remains to find the coordinatewise supremum of $D$.

Note that the inequality (\ref{equa}) can be written as
\[ 
(\psi+\gamma_2^2)u_1^2 - 2[(\rho\psi+\gamma_1\gamma_2)u_2+h(\gamma_1-\rho\gamma_2)]u_1 + (\psi+\gamma_1^2)u_2^2 - 2h(\gamma_2-\rho\gamma_1)u_2 \leq h^2(1-\rho^2).
\]
Denote by $\bm{q}=(u_1^*, u_2^*)^\top$ the coordinatewise supremum of $D$, then we have
\[
(\psi+\gamma_1^2)u_2^{*2} - 2h(\gamma_2-\rho\gamma_1)u_2^* = h^2(1-\rho^2) + \frac{\{(\rho\psi+\gamma_1\gamma_2)u_2^*+h(\gamma_1-\rho\gamma_2)\}^2}{\psi+\gamma_2^2},
\]
which can be simplified to
\[
\psi u_2^{*2} - 2h\gamma_2 u_2^* -h^2 = 0.
\]
Hence, 
\[
u_2^* = \max\Big(\frac{2h\gamma_2+\sqrt{4h^2\gamma_2^2+4h^2\psi}}{2\psi}, \frac{2h\gamma_2-\sqrt{4h^2\gamma_2^2+4h^2\psi}}{2\psi} \Big) = \frac{h(\gamma_2+\sqrt{h^2\gamma_2^2+h^2\psi})}{\psi}.
\]
Similarly, we get $u_1^*=h(\gamma_1+\sqrt{h^2\gamma_1^2+h^2\psi})/\psi$.
Since $(X_1,X_2)^\top \sim \text{GH}_2(\lambda,\kappa,\psi,\bm{\gamma},\bm{0},\Sigma)$, we know that $X_i \sim \text{GH}_2(\lambda,\kappa,\psi,\gamma_i,0,1), i=1,2$. 
Denote the marginal density function and distribution function of $X_i ,i=1,2$ as $f_{X_i}$ and $F_{X_i}$, respectively, and write $1-F_{X_i}(x)=e^{-\Psi_i (x)}$. 
Using the asymptotic relation $K_{\lambda}(x) \sim \sqrt{\pi/(2x)}e^{-x}, x\rightarrow\infty$ we get
\[
f_{X_i} (x) \sim c_i x^{\lambda-1} \exp\big\{-\big(\sqrt{\psi+\gamma_i^2}-\gamma_i\big) x \big\} \text{ as } x\rightarrow\infty,
\]
where $c_i$ is a constant.
As $\psi >0$ we know that $\sqrt{\psi+\gamma_i^2}-\gamma_i > 0$.
By Proposition 2 in \cite{Hammerstein16}, we have 
\[
1-F_{X_i}(x) \sim \frac{c_i x^{\lambda-1}}{\sqrt{\psi+\gamma_i^2}-\gamma_i}  \exp\big\{-\big(\sqrt{\psi+\gamma_i^2}-\gamma_i\big) x \big\} \text{ as } x\rightarrow\infty.
\]
Hence, the functions $\Psi_i, i=1,2$ are both regularly varying at infinity with the same exponent $a=1$.

As the supremum point $\bm{q}$ of an ellipse $D$ satisfies $\bm{q}\notin D$, by Lemma \ref{Nolde1} and Lemma \ref{Nodel2}, $X_1$ and $X_2$ are asymptotically independent and the residual dependence coefficient is
\begin{align*}
    \eta &= n_D(\bm{q})^{-a} \\
         &= [\{\sqrt{(\psi+\bm{\gamma}^\top\Sigma^{-1}\bm{\gamma})\bm{q}^\top\Sigma^{-1}\bm{q}}-\bm{q}^\top\Sigma^{-1}\bm{\gamma}\}/h ]^{-1} \\
         &= \frac{1-\rho^2}{\sqrt{\{\psi(1-\rho^2)+\gamma_1^2-2\rho\gamma_1\gamma_2+ \gamma_2^2\}(m_1^2-2\rho m_1 m_2+ m_2^2)}-m_1(\gamma_1-\rho\gamma_2)-m_2(\gamma_2-\rho\gamma_1)},
\end{align*}
where
\begin{equation*}
         m_1 = \frac{\gamma_1+\sqrt{\psi+\gamma_1^2}}{\psi}, \quad
         m_2 = \frac{\gamma_2+\sqrt{\psi+\gamma_2^2}}{\psi}.
\end{equation*} 
\end{proof}

When $\rho=0$, which corresponds to independence between the components of the Gaussian random vector $\bm{W}$ and implies that the dependence between $X_1$ and $X_2$ is fully specified by the mixing variable $R$, we have $\eta=\{(\psi+\gamma_1^2+\gamma_2^2)^{1/2}(m_1^2+m_2^2)^{1/2}-m_1\gamma_1-m_2\gamma_2\}^{-1}$, where $m_1, m_2$ are the same as above.
One further observation is that when $\psi\rightarrow\infty$, which means the mixing variable $R$ has a very light tail, we have $\eta=\sqrt{(1+\rho)/2}$.
In this case, we obtain the same $\eta$ as the elliptical GH distribution, i.e., when $\bm{\gamma}=\bm{0}$, and it equals the square root of the residual dependence coefficient of a bivariate Gaussian random vector with correlation $\rho$.

It is also important to note that when $\psi=0, \bm{\gamma}\neq\bm{0}$, the sample cloud does not converge onto a bounded set and thus the method we used above fails.
When $\psi=0, \bm{\gamma}=\bm{0}$, the GH distribution reduces to the (asymptotically dependent) Student-$t$ distribution and the tail dependence coefficient is known from the literature; see \cite{Embrechts2001}.

\begin{figure}[!t]
    \centering
    \includegraphics[scale=0.6]{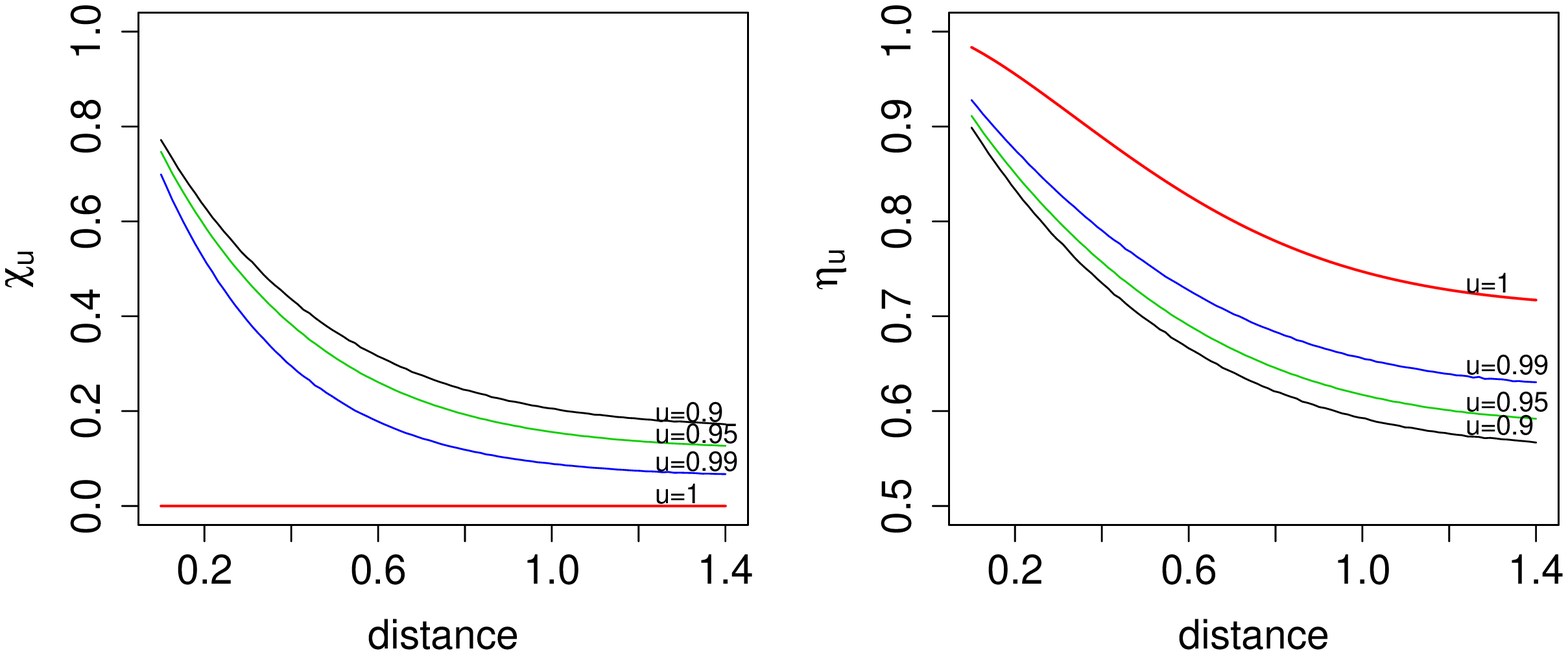}
    \caption{Bivariate $\chi_u$, $\eta_u$ for $u=0.9, 0.95, 0.99$ and their limits as $u\rightarrow 1$ of the GH distribution for $\lambda=-0.5, \kappa=\psi=1, \bm{\gamma}=\bm{0}$, and $\Sigma$ being a correlation matrix determined by a powered exponential correlation function $\rho(\bm{s}_1,\bm{s}_2)=\exp\{-(\|\bm{s}_1 -\bm{s}_2\|/0.6)^{3/2}\}$, where $\|\bm{s}_1 -\bm{s}_2\|$ is the Euclidean distance between sites $\bm{s}_1$ and $\bm{s}_2$. For $u<1$, $\chi_u$ and $\eta_u$ are calculated numerically using their definition (\ref{chidefinition}) and (\ref{etadefinition2}), and their limits are calculated using the results in Proposition \ref{mainresult}.}
    \label{figure::eta}
\end{figure}

Figure \ref{figure::eta} illustrates the flexibility in extremal dependence structures of the GH distribution, by plotting the bivariate $\chi_u$, $\eta_u$ and their limits as a function of distance between the pairs, for a range of values of $u\in[0.9,1)$, and $\lambda=-0.5, \kappa=\psi=1, \bm{\gamma}=\bm{0}$, $\Sigma$ being a correlation matrix determined by a powered exponential correlation function $\rho(\bm{s}_1,\bm{s}_2)=\exp\{-(\|\bm{s}_1 -\bm{s}_2\|/0.6)^{3/2}\}$, where $\|\bm{s}_1 -\bm{s}_2\|$ is the Euclidean distance between sites $\bm{s}_1$ and $\bm{s}_2$.
The results indicate the slow convergence of $\chi_u$ and $\eta_u$ to their limits, and at any observable levels, including $u=0.99$, nonnegligible dependence may exist ($\chi_u$ could be as large as 0.4 or 0.5 for pairs at short distances), but the pairs are asymptotically independent.
This emphasizes again the need for asymptotically independent models or models that can capture both asymptotic independence and dependence.

\section{Copula-Based Inference}
\subsection{Copula Model and Likelihood Inference}
As here we focus on modeling the spatial dependence, we take a copula approach.
A copula is a multivariate distribution function with standard uniform margins.
Thanks to Sklar's theorem \citep{Sklar1959}, for any continuous multi-dimensional distribution function there is a unique copula associated with it.
This is the case for the multivariate GH distribution and we call the associated copula the GH copula.
Specifically, suppose $(X_1,\dots,X_d)^\top$ follow a $d$-dimensional GH distribution $F$, which implies that their marginal distribution functions $F_1,\dots,F_d$ are univariate GH distributions, then the GH copula is defined by
\[
C(\bm{u}) = \mathP\{F_1(X_1)\leq u_1,\dots,F_d(X_d)\leq u_d\} = F\{F_1^{-1}(u_1),\dots,F_d^{-1}(u_d)\}, \quad \bm{u}\in [0,1]^d,
\]
where $F_i^{-1}, i=1,\dots,d$, is the inverse of $F_i$.
Its density function thus can be derived easily as
\[
c(\bm{u}) = \frac{\partial^d}{\partial u_1 \cdots \partial u_d} C(\bm{u}) = \frac{f\{F_1^{-1}(u_1),\dots,F_d^{-1}(u_d)\}}{f_1\{F_1^{-1}(u_1)\}\cdots f_d\{F_d^{-1}(u_d)\}},
\]
where $f$ is the joint density function and $f_1,\dots,f_d$ are the marginal GH density functions of $X_1,\dots,X_d$, respectively.

Due to the closed-form density of the multivariate GH distribution, inference based on the full likelihood is feasible.
Here both in the simulation study and data applications, we fit the GH copula model to the whole dataset, rather than using a censored likelihood approach as \cite{Wadsworth2012}.
By doing so, we gain two major benefits: (i) we can avoid the computational issues associated with the censored likelihood, which requires calculations of potentially high-dimensional distribution functions and that often have to be evaluated numerically; (ii) we can model dependence both in the bulk and in the tail.
However, we also point out that an obvious disadvantage of taking such a full likelihood approach is that the extremal dependence structure would not be as well captured as using a censored likelihood approach.
Here we choose the full likelihood approach because the GH copula model is highly flexible, more so than most other spatial models.
This means that it could be a reasonable approach, and the simulation study will support this.

We leverage the R \citep{R2020} package \textit{Rcpp} to evaluate the GH density function with low-level language C++ and the package \textit{ghyp} to compute the quantiles of the GH distribution, which uses splines to interpolate the univariate GH distribution function and then finds the root efficiently with the uniroot function in R.
These ``tricks'' have improved the efficiency of the algorithm and, as an illustration, the computational time for fitting the GH copula to a dataset with 100 sites and 2000 observations at each site is less than three hours on a laptop with 8 GB memory and 2.3 GHz Intel Core i5 processor.

\subsection{Simulation Study}
Simulation from the multivariate GH distribution is straightforward due to its stochastic representation (\ref{NMVM_repre}).
Specifically, independently sampling $R\sim\text{GIG}(\lambda,\kappa,\psi)$ and $\bm{W}\sim\mathcal{N}_d(\bm{0},\Sigma)$ and plugging them into formula (\ref{NMVM_repre}) yields exact samples from the distribution $\text{GH}_d(\lambda,\kappa,\psi,\bm{\gamma},\bm{\mu},\Sigma)$.
Sampling from a multivariate normal distribution is easy and sampling from a generalized inverse Gaussian distribution is feasible using the algorithm in \cite{Dagpunar1989}, which is implemented in the R package \textit{ghyp}.
Once a sample is generated from the multivariate GH distribution, one can obtain a sample for the GH copula by transforming the marginals to uniform distribution, using the probability integral transform.

To reduce the number of parameters in the GH copula model, in both the simulation study and data application, we here assume that $\bm{\gamma}=\gamma_c \bm{1}_d, \gamma_c\in\mathR$ and $\Sigma$ is a correlation matrix determined by a powered exponential correlation function $\rho(\bm{s}_1,\bm{s}_2)=\exp\{-(\|\bm{s}_1 -\bm{s}_2\|/\zeta)^\nu\}$, where $\|\bm{s}_1 -\bm{s}_2\|$ is the Euclidean distance between sites $\bm{s}_1$ and $\bm{s}_2$, $\zeta>0$ is the range parameter, and $\nu\in(0,2]$ controls the smoothness of the realized random fields. 
In the first simulation study we consider 25 sites on the uniform grid $\{0, 0.25, 0.5, 0.75, 1\}^2$, and set $\lambda=\kappa=\psi=\gamma_c=1$, $\zeta=0.2 \text{ or } 0.7$ (short or long range dependence), $\nu=0.5 \text{ or } 1.5$ (rough or smooth random field).
Then we generate $n=250, 500 \text{ or } 1000$ replicates from the GH copula at these sites and fit the GH copula to the simulated dataset of each scenario.

\begin{figure}[!t]
    \centering
    \includegraphics[scale=0.6]{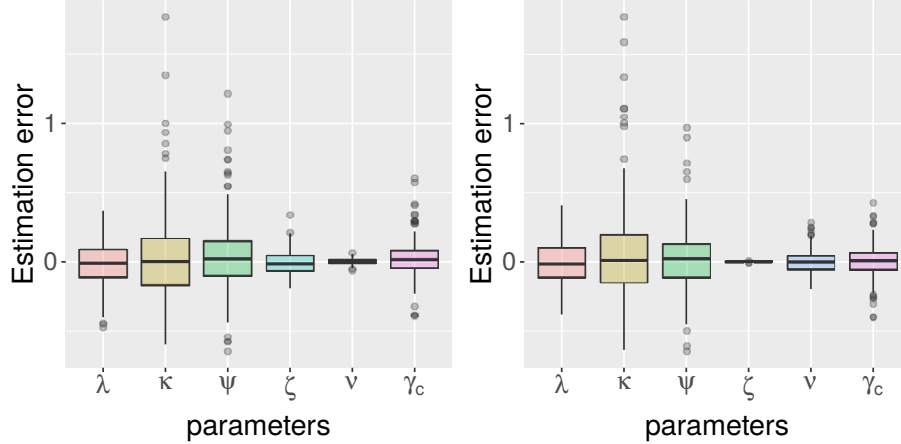}
    \caption{Boxplots of estimation errors of all the parameters for different scenarios in the first simulation study. Left panel: scenario $\zeta=0.7, \nu=0.5, n=500$; right panel: scenario $\zeta=0.2, \nu=1.5, n=500$.}
    \label{figure::scenario}
\end{figure}

We run the experiment for 300 times and Figure \ref{figure::scenario} depicts the boxplots of estimation errors (i.e., differences between parameter estimates and the true parameters) for all of the six parameters when $\zeta=0.7, \nu=0.5, n=500$ and $\zeta=0.2, \nu=1.5, n=500$.
The results indicate that the spatial range parameter $\zeta$ and smoothness parameter $\nu$ are relatively easier to estimate than the parameters $\lambda,\kappa,\psi$, which determine the shape of the mixing GIG distribution, and the skewness parameter $\gamma_c$.
We also observe that changing the values of $\zeta$ and $\nu$ does not affect the estimation of other parameters, but the larger value of $\zeta$, which corresponds to stronger spatial dependence between sites, leads to smaller variability of the estimate of $\nu$, and larger value of $\nu$ also yields smaller variability of the estimate of $\zeta$.
One further observation is that when we increase the number of replicates $n$, the variability of the estimates of all parameters decreases, roughly at a rate of $\sqrt{n}$ as expected, as in shown in Figure \ref{figure::repeff}.

\begin{figure}[!t]
    \centering
    \includegraphics[scale=0.6]{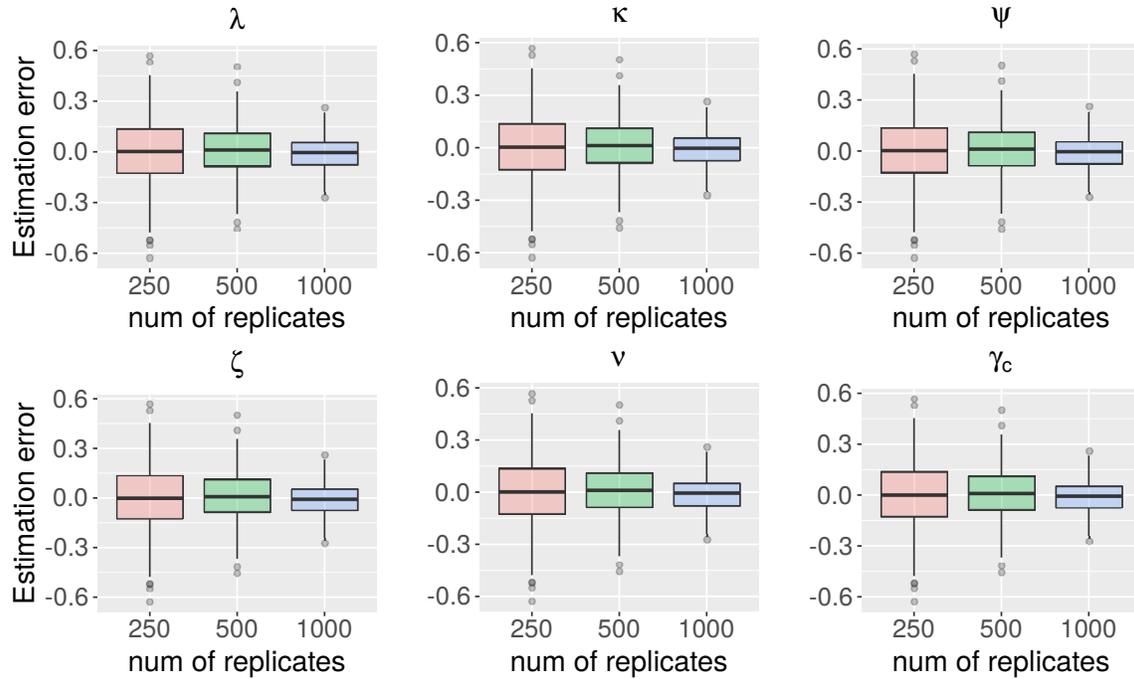}
    \caption{Boxplots of estimation errors of all the parameters for the scenario $\zeta=0.7, \nu=1.5$ with different number of replicates $n$ in the first simulation study.}
    \label{figure::repeff}
\end{figure}

In the second simulation study we consider 100 sites on the grid $\{0,1/9,2/9,\dots,8/9,1\}^2$ and generate 300 replicates from the inverted Brown--Resnick process \citep{Kabluchko2009,Wadsworth2012} using the extremal functions approach of \cite{Dombry2016}, with the variogram of the Brown--Resnick process defined as $\gamma(\bm{h})=2\|\bm{h}\|/\zeta$, where $\zeta=0.3$.
This model is known to be asymptotically independent with bivariate residual tail dependence coefficient function $\eta(\bm{h})=1/[2\Phi\{\sqrt{\gamma(\bm{h})}/2\}]$.
To illustrate the flexibility of the GH distribution in a misspecified setting, we then fit the GH copula and the normal inverse Gaussian (NIG) copula, which is a subclass of the GH copula when $\lambda=-1/2$, to the whole simulated dataset.
Figure \ref{figure::invBR} displays the true residual tail dependence coefficient of the inverted Brown--Resnick process, its counterpart of the fitted GH copula and NIG copula model, and empirical estimates of $\eta_u$, as defined in (\ref{etadefinition2}), based on the simulated dataset.
The results show that, despite fitting the GH copula to the entire dataset without censoring low non-extreme values, the GH copula captures the residual tail dependence quite well for the simulated dataset, and the shape of the fitted residual tail dependence coefficient function, with respect to distance, resembles its true function very closely, although a slight bias exists as expected.

\begin{figure}[!t]
    \centering
    \includegraphics[scale=0.55]{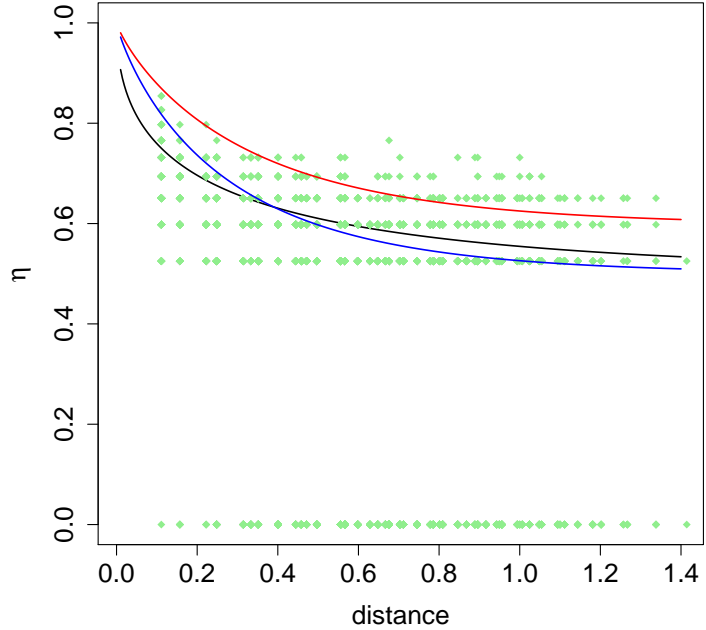}
    \caption{Estimated bivariate residual tail dependence coefficients in the second simulation study. Black line: true $\eta$ of the inverted Brown--Resnick process; red line: theoretical $\eta$ of the fitted normal inverse Gaussian copula, calculated using Proposition \ref{mainresult}; blue line: theoretical $\eta$ of the fitted GH copula, calculated using Proposition \ref{mainresult}; green diamond: empirical tail dependence coefficient $\eta_u$ with $u=0.95$.}
    \label{figure::invBR}
\end{figure}

\section{Environmental Applications}
\subsection{Wave Height Example}
We first consider the hindcast dataset of significant wave heights analyzed by \cite{Wadsworth2012} and \cite{Huser2019}.
This dataset contains eight observations per day over a period of 31 years, at 50 spatial locations in the North Sea.
To reduce temporal dependence and ease the computational burden, only one observation per day in the winter months (December, January, and February) at a subset of 20 sites has been used to fit the Huser--Wadsworth model from \cite{Huser2019}.
Here, we consider the same setting, although the GH copula or its subclasses are computationally more efficient to fit due to the closed-form density of the GH distribution.

The margins at each site are transformed to standard uniform following the semiparametric approach of \cite{ColesTawn1991}.
Then, three different models are fitted to the standardized dataset, namely the Huser--Wadsworth model, the Gaussian copula model and the GH copula model.
To illustrate the flexibility of our proposed GH model to capture the upper tail dependence in a disadvantageous situation, we first use a censored likelihood scheme to fit the Huser--Wadsworth model and the Gaussian copula model, which helps them better capture the extremal dependence structure; see equation (18) in \cite{Huser2019} for the censoring scheme.
Then we fit the GH copula to the data without censoring low observations.
Figure \ref{figure::waveheight} displays the fitted values of 20-variate $\chi_u$ and $\eta_u$, as defined in Section \ref{taildependence}, for the three models.
The uncertainty measures are based on the same stationary bootstrap procedure as in \cite{Huser2019}, i.e., 200 bootstrap samples are generated using the stationary bootstrap which sample blocks of geometric length.
Strikingly, the results show that the GH copula can capture the extremal dependence better than the Gaussian copula, even though a censored likelihood has been used for the Gaussian copula.
The performance of the GH copula is slightly worse than the Huser--Wadsworth copula, but the differences are fairly minor overall and the GH model still provides a very good fit in the upper tail despite its full uncensored likelihood estimation approach.
We stress again that the Huser--Wadsworth copula is specially designed to model the upper tail and a censored likelihood has been used for this purpose, while our proposed GH model is here used to model the entire distribution, from low to high quantiles.
We also fit the Huser--Wadsworth model and the Gaussian model without censoring the low observations.
The log-likelihood of GH copula is around 2000 units larger than that of the Huser--Wadsworth model and 6000 units larger than that of the Gaussian copula model, which indicates that the GH copula is clearly more flexible than the other two if all observations are considered.

\begin{figure}[!t]
    \centering
    \includegraphics[scale=0.6]{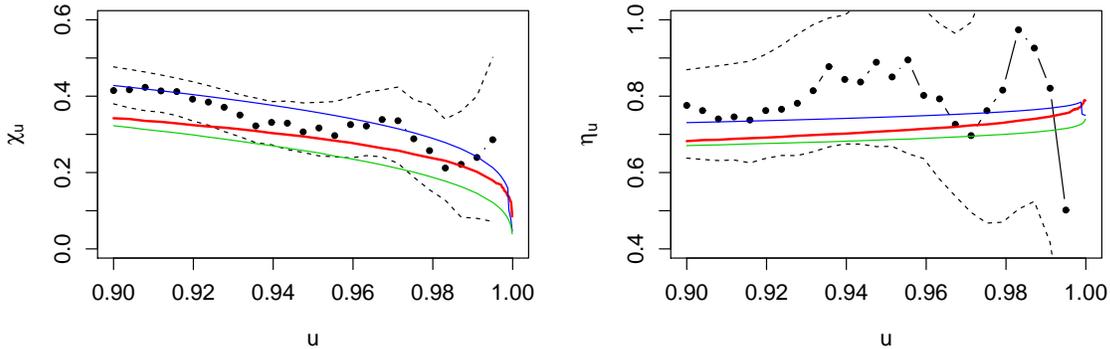}
    \caption{Estimates of $20$-variate $\chi_u$ (left panel) and $\eta_u$ (right panel) for the hindcast wave height data. Central black dots: empirical estimate of $\chi_u$ and $\eta_u$; dashed line: approximate $95\%$ confidence intervals based on a stationary bootstrap procedure; thick solid red line: fit from the GH copula; thin solid blue line: fit from the Huser-Wadsworth model; thin solid green line: fit from the Gaussian model.}
    \label{figure::waveheight}
\end{figure}

\subsection{Wind Gust Example}
To illustrate the computational benefits of our proposed model, we now consider a higher-dimensional dataset of daily maximum wind gusts from the state of Oklahoma, USA, which can be freely downloaded from mesonet.org.
This dataset contains daily measurement of highest 3-second wind speed from January 1, 1995 to December 31, 2020, at 120 observational stations across the state of Oklahoma.
To induce approximate stationarity, we only consider the summer observations, i.e., observations in July, August and September, which are often the highest in a year.
As some observational stations have many missing values, we delete the ones with more than 100 missing values, resulting in 95 sites with 2392 observations at each site.
The minimum distance between sites is around 12 km and the maximum distance is around 800 km.
The time series at each site appear approximately stationary, hence we work directly with this dataset and transform all the marginals to standard uniform using the nonparametric approach based on ranks.
Alternatively, we could use the semiparametric approach of \cite{ColesTawn1991} as in the wave height example.

We then fit the GH copula model to the standardized dataset, as well as four of its subclasses and limiting models, namely the Gaussian copula, the $t$ copula model, the hyperbolic copula and the NIG copula model, based on a full likelihood approach.
The computational time for fitting the GH copula is less than three hours on a laptop, which is a significant speed-up compared to the censored Huser--Wadsworth model in such high dimensions. 
It takes less than ten minutes to fit the Gaussian copula and $t$ copula due to the efficient computation of their quantile functions in R.
Table \ref{tab:model_AIC} shows the number of parameters in each model and the Akaike information criterion (AIC) of the fitted models.
The results indicate that the general GH copula has the best performance as expected, and the NIG submodel has the second best performance.
The large difference between the AIC values of the GH copula model and its limiting cases, the widely used Gaussian copula model and the $t$ copula model, implies that there is a clear advantage of using the GH copula model for this dataset despite its higher model complexity.

\begin{table}[!t]
\centering
\caption{Number of parameters and AIC values for the different models fitted in our wind gust data application}
\begin{tabular}{ccc}\toprule
Model & $\#$ of parameters  &  AIC \\ \midrule
Gaussian copula & 2 & $-298898$  \\
$t$ copula & 3 & $-324069$  \\
Hyperbolic copula & 5 & $-316636$  \\
NIG copula & 5 & $-327035$  \\
GH copula & 6 & $\bm{-328141}$  \\
\bottomrule
\end{tabular}
\label{tab:model_AIC}
\end{table}

To assess how well our model captures the dependence both in the bulk and the tail of the data, we consider Spearman's rank correlation $S_\rho$ and the bivariate extremal dependence measures $\chi_u$ and $\eta_u$ described in Section \ref{taildependence}.
We first calculate the empirical estimates of $S_\rho$, $\chi_u$ and $\eta_u$ with $u=0.95$ from the data, and their model-based counterparts from the estimated model.
Then we compute the bivariate kernel density estimators for pairs of observed empirical $S_\rho$ and fitted $S_\rho$, pairs of observed empirical $\chi_u$ and fitted $\chi_u$, and pairs of observed empirical $\eta_u$ and fitted $\eta_u$, using the R package \textit{ks}.
Figure \ref{fig:windgustfit} depicts the contours at level $25\%, 50\%$ and $75\%$ of these kernel density estimators.
The results show that our model can capture strong dependencies (at short and moderate distances) both in the bulk and the tail very well, but it overestimates weak dependencies (at large distances).
This is not very surprising as there are fewer pairs of sites that are weakly dependent (at long distances), so that strongly dependent pairs influence model estimation to a higher degree.
In particular, the spatially-constant mixing variable $R$ is constrained to provide a good fit at short distances, which simultaneously induces stronger dependencies across the entire spatial domain.
This can also be seen from the discussion on the values of $\eta$ for special cases of the GH distribution in Section \ref{section:GHtail}.
When the mixing variable $R$ has very light tails ($\psi\rightarrow\infty$), as the distance between sites tends to infinity and their correlation $\rho\rightarrow 0$, the residual dependence coefficient $\eta$ would converge to $1/\sqrt{2}$, which is much larger than $1/2$ and indicates that a strong dependence is still present in this case.

\begin{figure}[!t]
    \centering
    \includegraphics[width=\textwidth]{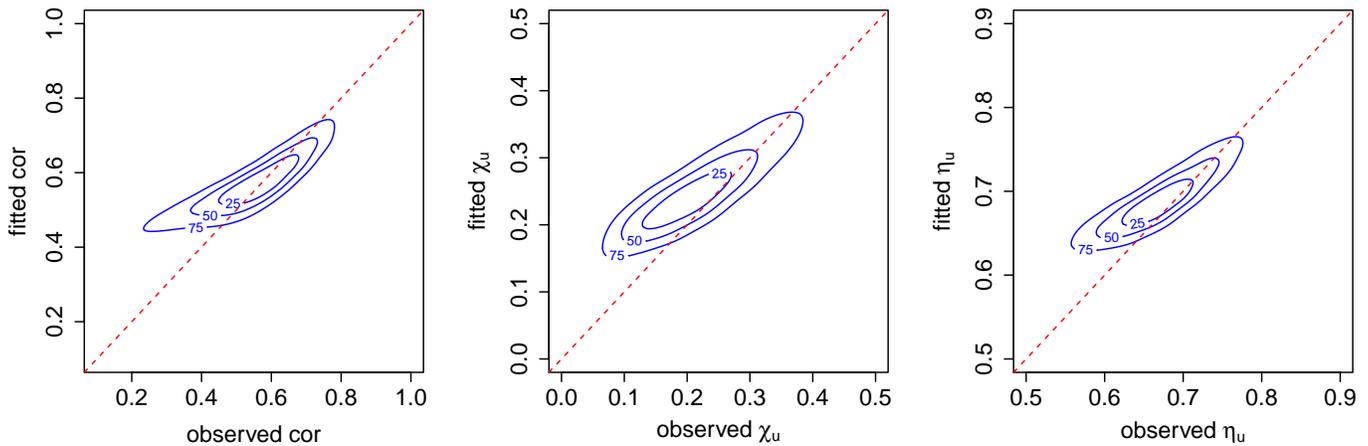}
    \caption{Contours at level $25\%, 50\%$ and $75\%$ of the bivariate kernel density estimators for pairs of observed empirical $S_\rho$ and fitted $S_\rho$ (left), pairs of observed empirical $\chi_u$ and fitted $\chi_u$ (middle), and pairs of observed empirical $\eta_u$ and fitted $\eta_u$ (right). A good fit should have density contours concentrated around the red dashed diagonal line.}
    \label{fig:windgustfit}
\end{figure}

\section{Discussion}
The GH distribution has a long history, and found popularity in financial modeling.
In this work we have re-investigated its tail dependence properties, pointed out the mistakes in its derivations in the literature, and gave a corrected description.
Based on this result, we propose to use the GH copula for spatial extremes when asymptotic independence is present, which contributes to the spatial extremes literature, as existing models in this setting that are both flexible and can be fitted efficiently are scarce.
We demonstrate the flexibility of this model both by a simulation study and two environmental data applications.

Unlike the asymptotic dependence case, where the scientific community broadly agrees that max-stable models should be fitted to block maxima, and generalized Pareto models should be fitted to threshold exceedances,
there is no such consensus in the asymptotic independence case yet.
\cite{Wadsworth2012} argue that it is often more natural, especially in the presence of asymptotic independence, to model the extremes of original events rather than site-wise maxima, and they suggest using a censored likelihood approach to better capture the extremal dependence.
Here, we did not censor the low observations when using the GH copula, because, as shown in the wave height data application, the non-censored GH copula can capture the tail dependence even better than the censored Gaussian copula and the high flexibility of the GH copula allows this full likelihood approach.
Moreover, in this case we can gain significant computational efficiency and also capture dependence in the bulk, as shown in the wind gust data application.

One interesting research direction is to investigate the tail properties in the limiting case of the GH distribution when $\psi=0$, which is an interesting distribution in its own as its marginal distribution has asymmetric tails, i.e., one has a power decay and the other has an exponential decay, and it reduces to the $t$ distribution when the skewness parameter $\bm{\gamma}$ tends to zero.
Unfortunately, the two different methods that we used in this work to derive the tail properties both fail in the limiting case when $\psi=0$ and the skewness parameter is not zero.
Hence, this remains an open problem.
Another future research direction is to investigate the tail dependence properties of more general normal mean-variance mixtures when the mixing distribution is specified only through its tail behavior.
As discussed in the wind gust data application, a weakness of the GH copula model is that independence cannot be captured as distance tends to infinity.
This is a typical weakness for all random scale (and location) models. 
Hence, one can try to overcome this issue and improve the performance of the GH copula model to better capture long-range weak dependence structures, for which the random partitioning approach of \cite{Morris2017} might be useful.

\section*{Acknowledgement}
This publication is based upon work supported by the King Abdullah Univer- sity of Science and Technology (KAUST) Office of Sponsored Research (OSR) under Award No. OSR-CRG2017-3434.
We gratefully acknowledge Philip Jonathan of Shell Research for providing the wave height data analysed in Section 4.1.



\bibliography{reference}

\end{document}